\documentclass[a4paper,11pt,aps,pra,notitlepage,superscriptaddress,tightenlines]{revtex4-1}

%----------------------------------------------
% Death to non-UTF8 encodings<
%----------------------------------------------

\usepackage[utf8]{inputenc}
%\usepackage[francais]{babel}

%----------------------------------------------
% Fonts
%----------------------------------------------

%----------------------------------------------
% Packages
%----------------------------------------------

\usepackage[a4paper]{geometry}

\usepackage{amsfonts}
\usepackage{amsmath}
\usepackage{amssymb}
\usepackage{amsthm}  
\usepackage{mathrsfs}
\usepackage{nicefrac}
\usepackage{graphicx}

\usepackage[colorlinks=true,linkcolor=black,citecolor=black,plainpages=false,pdfpagelabels]{hyperref}
\hypersetup{pdftitle={Template}}
\usepackage{color}

%----------------------------------------------
% Theorem/Lemma/Definition/Alouette
%----------------------------------------------

\theoremstyle{plain}
\newtheorem{theorem}{Theorem}
\newtheorem{lemma}[theorem]{Lemma}

\newtheorem{corollary}[theorem]{Corollary}

\DeclareMathOperator{\tr}{Tr}
\newcommand{\id}{1}
\newcommand{\cP}{\mathcal{P}}
\newcommand{\cS}{\mathcal{S}}

\begin{document}

\title{\LARGE Relating different quantum generalizations of the conditional R\'enyi entropy}

\author{Marco Tomamichel}
\affiliation{Centre for Quantum Technologies, National University of Singapore, Singapore 117543, Singapore}
\affiliation{School of Physics, The University of Sydney, Sydney 2006, Australia}
\author{Mario Berta}
\affiliation{Institute for Quantum Information and Matter, Caltech, Pasadena, CA 91125, USA}
\affiliation{Institute for Theoretical Physics, ETH Zurich, 8092 Z\"urich, Switzerland}
\author{Masahito Hayashi}
\affiliation{Graduate School of Mathematics, Nagoya University, 
Furocho, Chikusaku, Nagoya, 464-860, Japan}
\affiliation{Centre for Quantum Technologies, National University of Singapore, Singapore 117543, Singapore}

%\date{\today}

\begin{abstract}
  Recently a new quantum generalization of the R\'enyi divergence and the corresponding conditional R\'enyi entropies was proposed. Here we report on a surprising relation between conditional R\'enyi entropies based on this new generalization and conditional R\'enyi entropies based on the quantum relative R\'enyi entropy that was used in previous literature. Our result  generalizes the well-known duality relation $H(A|B) + H(A|C) = 0$ of the conditional von Neumann entropy for tripartite pure states to R\'enyi entropies of two different kinds.
   As a direct application, we prove a collection of inequalities that relate different conditional R\'enyi entropies and derive a new entropic uncertainty relation.
\end{abstract}

\maketitle

\section{Introduction}

Recently, there has been renewed interest in finding suitable quantum generalizations of R\'enyi's~\cite{renyi61} entropies and divergences. This is due to the fact that R\'enyi entropies and divergences have a wide range of applications in classical information theory and cryptography, see, e.g.~\cite{csiszar95}.

We will review some of the recent progress here, but refer the reader to~\cite{lennert13} for a more in-depth discussion. 
For our purposes, a quantum system is modeled by a finite dimensional Hilbert space. We denote by $\cP$ the set of positive semi-definite operators on that Hilbert space, and by $\cS$ the subset of density operators with unit trace.

The following natural quantum generalization of the Rényi divergence has been widely used and has found operational significance, for example, as a cut-off rate in quantum hypothesis testing~\cite{mosonyi11} (see also~\cite{ogawa04,nagaoka06}). It is usually referred to as \emph{quantum R\'enyi relative entropy} and for all $\alpha \in (0, 1) \cup (1, \infty)$ given as
\begin{equation}
  D_{\alpha}(\rho\|\sigma) := \frac{1}{\alpha-1} \log\tr\left\{\rho^{\alpha}\sigma^{1-\alpha}\right\} \label{eq:dold}
\end{equation}
for arbitrary $\rho \in \cS$, $\sigma \in \cP$ that satisfy $\rho \ll \sigma$. (The notation $\rho \ll \sigma$ means that $\sigma$ dominates $\rho$, i.e.\ the kernel of $\sigma$ lies inside the kernel of $\rho$.)

While this definition has proven useful in many applications, it has a major drawback in that it does not satisfy the data-processing inequality (DPI) for $\alpha > 2$. The DPI states that the quantum R\'enyi relative entropy is contractive under application of a quantum channel, i.e., $D_{\alpha}\big(\mathcal{E}[\rho]\big\|\mathcal{E}[\sigma]\big) \leq D_{\alpha}(\rho\|\sigma)$ for any completely positive trace-preserving map $\mathcal{E}$. Intuitively, this property is very desirable since we want to think of the divergence as a measure of how well $\rho$ can be distinguished from $\sigma$, and this can only get more difficult after a channel is applied.

Recently, an alternative quantum generalization has been investigated~\cite{martinthesis,lennert13,wilde13} (see also~\cite{mytutorial12}). It is referred to as \emph{quantum R\'enyi divergence} (or sandwiched R\'enyi relative entropy in~\cite{wilde13}) and defined as
\begin{equation}
  \widetilde{D}_{\alpha}(\rho\|\sigma) := \frac{1}{\alpha-1} \log\tr\left\{ \left( \sigma^{\frac{1-\alpha}{2\alpha}} \rho \sigma^{\frac{1-\alpha}{2\alpha}} \right)^{\alpha} \right\} \label{eq:dnew}
\end{equation}
for all $\alpha \in (0, 1) \cup (1, \infty)$ and $\rho \in \cS$, $\sigma \in \cP$ that satisfy $\rho \ll \sigma$.
%under the same constraints. 
The quantum R\'enyi divergence has found operational significance in the converse part of quantum hypothesis testing~\cite{mosonyiogawa13}. As such, it satisfies the DPI for all $\alpha \geq \frac12$ as was shown by Frank and Lieb~\cite{frank13} and independently by Beigi~\cite{beigi13} for $\alpha > 1$. See also earlier work~\cite{martinthesis,lennert13} where a different proof is given for $\alpha \in (1, 2]$. Furthermore, the quantum R\'enyi divergence has already proven an indispensable tool, for example in the study of strong converse capacities of quantum channels~\cite{wilde13,Gupta13}.

The definitions,~\eqref{eq:dold} and~\eqref{eq:dnew}, are in general different but coincide when $\rho$ and $\sigma$ commute. 
For $\alpha \in \{0, 1, \infty\}$, we define $D_{\alpha}(\rho\|\sigma)$ and $\widetilde{D}_{\alpha}(\rho\|\sigma)$ as the corresponding limit. For $\alpha\to 0$ it has been shown that~\cite{datta13,audenaert13}:
\begin{align}
D_{0}(\rho\|\sigma) &=-\log\tr\left\{\Pi_{\rho}\sigma\right\}\,\\
\widetilde{D}_{0}(\rho\|\sigma) &=-\log\max_{i_{1},\ldots,i_{s}}\left\{\sum_{j=1}^{s}\lambda_{i_{j}}:\left\{\Pi_{\rho}|i_{j}\rangle\right\}\;\mathrm{linearly}\;\mathrm{independent}\right\}\,
\end{align}
with the eigenvalue decomposition $\sigma=\sum_{i}\lambda_{i}|i\rangle\langle i|$, $s=\mathrm{rank}\left(\Pi_{\rho}\sigma\right)$, and $\Pi_{\rho}$ the projector on the support of $\rho$. In the limit $\alpha \to 1$ both expressions converge to the quantum relative entropy~\cite{martinthesis,lennert13,wilde13}, namely
\begin{align}
  D_{1}(\rho\|\sigma) = \widetilde{D}_{1}(\rho\|\sigma) = D(\rho\|\sigma) := \tr\big\{\rho(\log \rho - \log \sigma)\big\} \,.
\end{align}
For $\alpha\to\infty$ the limits have been evaluated in~\cite{lennert13} and~\cite{tomamichel08}, respectively:
\begin{align}
\widetilde{D}_{\infty}(\rho\|\sigma) &= \inf \big\{ \lambda \in \mathbb{R} : \rho\leq 2^{\lambda} \sigma\big\}\,\\
  D_{\infty}(\rho\|\sigma) &=\log\max_{i,j}\left\{\frac{\nu_{i}}{\mu_{j}}:\langle i|\bar{j}\rangle\neq0\right\}\,
\end{align}
with the eigenvalue decompositions $\rho=\sum_{i}\nu_{i}|i\rangle\langle i|$ and $\sigma=\sum_{j}\mu_{j}|\bar{j}\rangle\langle\bar{j}|$. 

It has been observed~\cite{wilde13,datta13} that the relation
\begin{align}
  D_{\alpha}(\rho\|\sigma) &\geq \widetilde{D}_{\alpha}(\rho\|\sigma) 
  \label{eq:lieb-trace}
\end{align}
follows from the Araki-Lieb-Thirring trace inequality~\cite{ariki90,liebthirring05}.
Furthermore, $\alpha \mapsto D_{\alpha}(\rho\|\sigma)$ and $\alpha \mapsto \widetilde{D}_{\alpha}(\rho\|\sigma)$ are monotonically increasing functions. For the latter quantity, this was shown in~\cite{lennert13} and independently in~\cite{beigi13}.

Finally, very recently Audenaert and Datta~\cite{audenaert13} defined a more general two parameter family of \emph{$\alpha$-z-relative R\'enyi entropies} of the form
\begin{align}
D_{\alpha,z}(\rho\|\sigma):=\frac{1}{\alpha-1}\log\tr\left\{\left(\rho^{\frac{\alpha}{z}}\sigma^{\frac{1-\alpha}{z}} \right)^{z}\right\} ,
\end{align}
and explored some of its properties. We clearly have
%\begin{align}
$D_{\alpha} \equiv D_{\alpha,1}$ and $\widetilde{D}_{\alpha} \equiv D_{\alpha,\alpha}$.
%\end{align}

\section{Quantum Conditional R\'enyi Entropies}

We will in the following consider disjoint quantum systems, denoted by capital letters $A, B$ and $C$. The sets $\cP(A)$ and $\cS(A)$ take on the expected meaning.

The conditional von Neumann entropy can be conveniently defined in terms of the quantum relative entropy as follows. For a bipartite state $\rho_{AB} \in \cS(AB)$, we define
\begin{align}
  H(A|B)_{\rho} :\!&= H(\rho_{AB}) - H(\rho_B) \label{eq:vn}\\
  &= - D(\rho_{AB} \| \id_A \otimes \rho_B) \label{eq:vn1}\\
  &= \sup_{\sigma_B \in \cS(B)} - D(\rho_{AB} \| \id_A \otimes \sigma_B),\label{eq:vn2}
\end{align}
where $H(\rho) := - \tr\{\rho \log \rho\}$ is the usual von Neumann entropy. The last equality can be verified using the relation $D(\rho_{AB}\|\id_A \otimes \sigma_B) = D(\rho_{AB}\|\id_A \otimes \rho_B) + D(\rho_B\|\sigma_B)$ together with the fact that $D(\cdot\|\cdot)$ is positive definite.

In the case of R\'enyi entropies, it is not immediate which expression, \eqref{eq:vn}, \eqref{eq:vn1} or~\eqref{eq:vn2}, should be used to define the conditional R\'enyi entropies. It has been found in the study of the classical special case (see, e.g.~\cite{Iwamoto13} for an overview) that generalizations based on~\eqref{eq:vn} have severe limitations, for example they cannot be expected to satisfy a DPI.
On the other hand, definitions based on the underlying divergence, as in~\eqref{eq:vn1} or~\eqref{eq:vn2}, have proven to be very fruitful and lead to quantities with operational significance.
Together with the two proposed quantum generalizations of the R\'enyi divergence in~\eqref{eq:dold} and~\eqref{eq:dnew},
this leads to a total of four different candidates for conditional R\'enyi entropies. For $\alpha \geq 0$ and $\rho_{AB} \in \cS(AB)$, we define
\begin{align}
  H_{\alpha}^{\downarrow}(A|B)_{\rho} &:= - D_{\alpha}(\rho_{AB}\|\id_A \otimes \rho_B), \label{eq:had} \\
  H_{\alpha}^{\uparrow}(A|B)_{\rho} &:= \sup_{\sigma_B \in \cS(B)} - D_{\alpha}(\rho_{AB}\|\id_A \otimes \sigma_B), \label{eq:hau} \\
    \widetilde{H}_{\alpha}^{\downarrow}(A|B)_{\rho} &:= - \widetilde{D}_{\alpha}(\rho_{AB}\|\id_A \otimes \rho_B), \qquad \qquad \qquad \textrm{and} \label{eq:hatd} \\
  \widetilde{H}_{\alpha}^{\uparrow}(A|B)_{\rho} &:= \sup_{\sigma_B \in \cS(B)} - \widetilde{D}_{\alpha}(\rho_{AB}\|\id_A \otimes \sigma_B) \label{eq:hatu}.
\end{align}

The fully quantum entropy $H_{\alpha}^{\downarrow}$ has first been studied in~\cite{tomamichel08}. For the classical and classical-quantum special case this quantity gives a generalization of the leftover hashing lemma~\cite{bennett95} for the modified mutual information to R\'enyi entropies with $\alpha\neq2$~\cite{Hayashi11_2,hayashi12}.

The classical version of $H_{\alpha}^{\uparrow}$ was introduced by Arimoto for an evaluation of the guessing probability~\cite{Arimoto75}. We note that he used another but equivalent expression for $H_{\alpha}^{\uparrow}$ that we later explain in Lemma~\ref{lm:dau-new}. Then, Gallager used $H_{\alpha}^{\uparrow}$ (again in the form of Lemma~\ref{lm:dau-new}) to upper bound the decoding error probability of a random coding scheme for data compression with side-information~\cite{gallager79,Yagi12}. The classical and classical-quantum special cases of $H_{\alpha}^{\uparrow}$ were, for example, also investigated in~\cite{hayashi12,Hayashi13_2} and realize another type of a generalization of the leftover hashing lemma for the $L_1$-distinguishability in the study of randomness extraction to R\'enyi entropies with $\alpha\neq2$.

It follows immediately from the definition and the corresponding property of $D_{\alpha}$ that these two entropies satisfy a data-processing inequality. Namely for any quantum operation $\mathcal{E}_{B\rightarrow B'}$ with $\tau_{AB'} = \mathcal{E}_{B\rightarrow B'}[\rho_{AB}]$ and any $\alpha \in [0, 2]$, we have 
\begin{align}
  H_{\alpha}^{\downarrow}(A|B)_{\rho} \leq H_{\alpha}^{\downarrow}(A|B')_{\tau} \qquad \textrm{and} \qquad 
  H_{\alpha}^{\uparrow}(A|B)_{\rho} \leq H_{\alpha}^{\uparrow}(A|B')_{\tau} 
\end{align}
while their classical-quantum versions have been obtained in~\cite{hayashi12}.

The conditional entropy $\widetilde{H}_{\alpha}^{\uparrow}$ was proposed in~\cite{mytutorial12} and investigated in~\cite{lennert13}, whereas $\widetilde{H}_{\alpha}^{\downarrow}$ is first considered in this paper. (Since the relative entropies $\widetilde{D}_{\alpha}$ and $D_{\alpha}$ are identical for commuting operators, we note that $\widetilde{H}_{\alpha}^{\uparrow} = H_{\alpha}^{\uparrow}$ as well as $\widetilde{H}_{\alpha}^{\downarrow} = H_{\alpha}^{\downarrow}$ for classical distributions.) Both definitions satisfy the above data-processing inequality for $\alpha \geq \frac12$.

Furthermore, it is easy to verify that all entropies considered are invariant under applications of local isometries on either the $A$ or $B$ systems. Lastly, note that the optimization over $\sigma_B$ can always be restricted to $\sigma_B \gg \rho_B$ for $\alpha > 1$.

We use up and down arrows to express the trivial observation that $H_{\alpha}^{\uparrow}(A|B)_{\rho} \geq H_{\alpha}^{\downarrow}(A|B)_{\rho}$ and $\widetilde{H}_{\alpha}^{\uparrow}(A|B)_{\rho} \geq \widetilde{H}_{\alpha}^{\downarrow}(A|B)_{\rho}$ by definition. Finally,~\eqref{eq:lieb-trace} gives us the additional relations $\widetilde{H}_{\alpha}^{\uparrow}(A|B)_{\rho} \geq H_{\alpha}^{\uparrow}(A|B)_{\rho}$ and $\widetilde{H}_{\alpha}^{\downarrow}(A|B)_{\rho} \geq H_{\alpha}^{\downarrow}(A|B)_{\rho}$. These relations are summarized in Figure~\ref{fig:overview}. Moreover, inheriting these properties from the corresponding divergences, all entropies are monotonically decreasing functions of $\alpha$ 

For $\alpha = 1$, all definitions coincide with the usual von Neumann conditional entropy~\eqref{eq:vn1}. For $\alpha = \infty$, two quantum generalizations of the conditional min-entropy emerge, which both have been studied by Renner~\cite{renner05}. Namely,
\begin{align}
  \widetilde{H}_{\infty}^{\downarrow}(A|B)_{\rho} &= \sup \big\{ \lambda \in \mathbb{R} : \rho_{AB} \leq 2^{-\lambda} \id_A \otimes \rho_B \big\} \label{eq:min1} \qquad \textrm{and} \\
  \widetilde{H}_{\infty}^{\uparrow}(A|B)_{\rho} &= \sup \big\{ \lambda \in \mathbb{R} : \rho_{AB} \leq 2^{-\lambda} \id_A \otimes \sigma_B,\, \sigma_B \in \cS(B) \big\} \label{eq:min2} .
\end{align}
(The notation $H_{\min}(A|B)_{\rho|\rho} \equiv \widetilde{H}_{\infty}^{\downarrow}(A|B)_{\rho}$ and $H_{\min}(A|B)_{\rho} \equiv \widetilde{H}_{\infty}^{\uparrow}(A|B)_{\rho}$ is widely used. However, we prefer our notation as it makes our exposition in this manuscript clearer.)
For $\alpha=2$, we find a quantum generalization of the conditional collision entropy as introduced by Renner~\cite{renner05}:
\begin{align}
\widetilde{H}_{2}^{\downarrow}(A|B)_{\rho}=-\log\tr\left\{\left(\rho_{AB} \Big(1_{A}\otimes\rho_{B}^{-\frac12} \Big)\right)^{2}\right\}. \label{eq:h2}
\end{align}
For $\alpha= \frac12$, we find the quantum conditional max-entropy first studied by K\"onig \emph{et al.}~\cite{koenig08},
\begin{align}
\widetilde{H}_{\nicefrac{1}{2}}^{\uparrow}(A|B)_{\rho}=\sup_{\sigma_B \in \cS(B)} 2 \log F(\rho_{AB},1_{A}\otimes\sigma_{B})\,,\label{eq:hmax}
\end{align}
where $F(\cdot,\cdot)$ denotes the fidelity. (The alternative notation $H_{\max}(A|B)_{\rho} \equiv \widetilde{H}_{\nicefrac{1}{2}}^{\uparrow}(A|B)_{\rho}$ is often used.) For $\alpha = 0$, we find a quantum conditional generalization of the Hartley entropy~\cite{hartley28} that was initially considered by Renner~\cite{renner05},
\begin{align}
H_0^{\uparrow}(A|B)_{\rho} = \sup_{\sigma_B \in \cS(B)} \log \tr \{ \Pi_{\rho_{AB}}\, \id_A \otimes \sigma_B  \}\,,\label{eq:h0}
\end{align}
where $\Pi_\rho$ denotes the projector onto the support of $\rho$.

\begin{figure}
\includegraphics{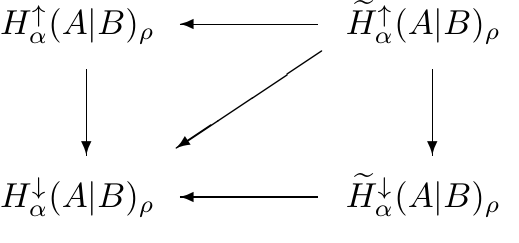}
\caption{Overview of the different conditional entropies used in this paper. Arrows indicate that one entropy is larger or equal to the other for all states $\rho_{AB} \in \cS(AB)$ and all $\alpha \geq 0$.}
\label{fig:overview}
\end{figure}

\section{Duality Relations}

It is well known that, for any tripartite pure state $\rho_{ABC}$, the relation
\begin{align}
  H(A|B)_{\rho} + H(A|C)_{\rho} = 0 \label{eq:dual-vn}
\end{align}
holds. We call this a \emph{duality relation} for the conditional entropy. To see this, simply write $H(A|B)_{\rho} = H(\rho_{AB}) - H(\rho_B)$ and $H(A|C)_{\rho} = H(\rho_{AC}) - H(\rho_C)$ and note that the spectra of $\rho_{AB}$ and $\rho_C$ as well as the spectra of $\rho_{B}$ and $\rho_{AC}$ agree. The significance of this relation is manifold\,---\,for example it turns out to be useful in cryptography where the entropy of an adversarial party, let us say $C$, can be estimated using local state tomography by two honest parties, $A$ and $B$.
In the following, we are interested to see if such relations hold more generally for conditional R\'enyi entropies.

It was shown in~\cite[Lem.~6]{tomamichel08} that $H_{\alpha}^{\downarrow}$ indeed satisfies a duality relation, namely
\begin{align}
  H_{\alpha}^{\downarrow}(A|B)_{\rho} + H_{\beta}^{\downarrow}(A|C)_{\rho} = 0 \qquad \textrm{when}\quad \alpha + \beta = 2,\ \alpha, \beta \geq 0 \,.
\end{align}
Note that the map $\alpha \mapsto \beta = 2 - \alpha$ maps the interval $[0, 2]$, where data-processing holds, onto itself. This is not surprising. Indeed, consider the Stinespring dilation $\mathcal{U}_{B \to B'B''}$ of a quantum channel $\mathcal{E}_{B \to B'}$. Then, for $\rho_{ABC}$ pure, $\tau_{AB'B''C} = \mathcal{U}_{B \to B'B''}[\rho_{ABC}]$ is also pure and the above duality relation implies that
\begin{align}
  H_{\alpha}^{\downarrow}(A|B)_{\rho} \leq H_{\alpha}^{\downarrow}(A|B')_{\tau} \iff H_{\beta}^{\downarrow}(A|C)_{\rho} \geq H_{\beta}^{\downarrow}(A|B''C)_{\tau} .
\end{align}
Hence, data-processing for $\alpha$ holds if and only if data-processing for $\beta$ holds.

A similar relation has recently been discovered for $\widetilde{H}_{\alpha}^{\uparrow}$ in~\cite{lennert13} and independently in~\cite{beigi13}. There, it is shown that
\begin{align}
  \widetilde{H}_{\alpha}^{\uparrow}(A|B)_{\rho} + \widetilde{H}_{\beta}^{\uparrow}(A|C)_{\rho} = 0 \qquad \textrm{when}\quad \frac1{\alpha} + \frac{1}{\beta} = 2,\ \alpha, \beta \geq \frac12 \,.
\end{align}
As expected, the map $\alpha \mapsto \beta = \frac{\alpha}{2\alpha-1}$ maps the interval $[\frac12,\infty]$, where data-processing holds, onto itself.

The purpose of the following is thus to show if a similar relation holds for the remaining two candidates, $H_{\alpha}^{\uparrow}$ and $\widetilde{H}_{\alpha}^{\downarrow}$. First, we find the following alternative expression for $H_{\alpha}^{\uparrow}$ by determining the optimal $\sigma_B$ in the definition~\eqref{eq:hau}.

\begin{lemma}
  \label{lm:dau-new}
  Let $\alpha \in (0, 1) \cup (1, \infty)$ and $\rho_{AB} \in \cS(AB)$. Then,
  \begin{align}
    H_{\alpha}^{\uparrow}(A|B)_{\rho} = \frac{\alpha}{1-\alpha} \log \tr\Big\{ \big( \tr_A \{ \rho_{AB}^{\alpha} \} \big)^{\frac{1}{\alpha}}  \Big\} . \label{eq:dau-new}
  \end{align}
\end{lemma}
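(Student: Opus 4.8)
The plan is to eliminate the supremum over $\sigma_B$ in the definition~\eqref{eq:hau} by recognising it as a single (possibly reverse) Hölder optimisation, and then to read off the maximiser explicitly. First I would unfold the definition: since $(\id_A \otimes \sigma_B)^{1-\alpha} = \id_A \otimes \sigma_B^{1-\alpha}$, taking the partial trace over $A$ gives
\begin{align}
  - D_\alpha(\rho_{AB}\|\id_A\otimes\sigma_B) = \frac{1}{1-\alpha}\log\tr\big\{ X_B\,\sigma_B^{1-\alpha}\big\}, \qquad X_B := \tr_A\{\rho_{AB}^\alpha\} \in \cP(B) .
\end{align}
The task thus reduces to optimising the scalar quantity $\tr\{X_B\,\sigma_B^{1-\alpha}\}$ over $\sigma_B \in \cS(B)$, which is maximised when $\alpha < 1$ (the prefactor $\tfrac{1}{1-\alpha}$ is positive) and minimised when $\alpha > 1$ (the prefactor is negative). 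Note that $X_B$ and $\rho_B$ share the same support, since $\rho_{AB}^\alpha$ and $\rho_{AB}$ do for $\alpha > 0$.

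For $\alpha \in (0,1)$ I would apply Hölder's inequality for Schatten norms with the conjugate pair $p = \tfrac1\alpha$ and $q = \tfrac1{1-\alpha}$, both of which exceed $1$. This yields
\begin{align}
  \tr\{X_B\,\sigma_B^{1-\alpha}\} \le \|X_B\|_{1/\alpha}\,\big\|\sigma_B^{1-\alpha}\big\|_{1/(1-\alpha)} = \big(\tr\{X_B^{1/\alpha}\}\big)^{\alpha}\,\big(\tr\{\sigma_B\}\big)^{1-\alpha} = \big(\tr\{X_B^{1/\alpha}\}\big)^{\alpha} ,
\end{align}
using $\tr\{\sigma_B\}=1$. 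Since the inequality holds for every operator, it already covers non-commuting candidates, so no separate argument that the optimiser commutes with $X_B$ is needed. The equality condition in Hölder forces $X_B^{1/\alpha} \propto \sigma_B$, which identifies the maximiser $\sigma_B^\star = X_B^{1/\alpha}/\tr\{X_B^{1/\alpha}\} \in \cS(B)$. Inserting this value and simplifying the exponents (observe $1 + \tfrac{1-\alpha}{\alpha} = \tfrac1\alpha$) produces precisely $\tfrac{\alpha}{1-\alpha}\log\tr\{X_B^{1/\alpha}\}$, which is~\eqref{eq:dau-new}.

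For $\alpha > 1$ the same computation goes through with the \emph{reverse} Hölder inequality, now with $p = \tfrac1\alpha \in (0,1)$ and the negative conjugate exponent $q = \tfrac1{1-\alpha} < 0$; this produces the matching lower bound $\tr\{X_B\,\sigma_B^{1-\alpha}\} \ge \big(\tr\{X_B^{1/\alpha}\}\big)^{\alpha}$, and after multiplication by the negative prefactor it again collapses to~\eqref{eq:dau-new} with the same maximiser $\sigma_B^\star$. I expect this regime to be the main obstacle: the reverse Hölder inequality is less standard in the non-commutative trace setting (though it can be bootstrapped from the ordinary Hölder inequality), and the negative power $\sigma_B^{1-\alpha}$ forces one to restrict the optimisation to $\sigma_B \gg \rho_B$, as already remarked for $\alpha > 1$ in the preceding section, so that $X_B$ and $\sigma_B$ share the support on which all fractional powers are well defined. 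Once this support bookkeeping is in place and the equality condition $\sigma_B \propto X_B^{1/\alpha}$ is checked to lie in the feasible set, both cases reduce to the single formula, and the excluded values $\alpha \in \{1,\infty\}$ follow by continuity.
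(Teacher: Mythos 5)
Your proposal is correct and follows essentially the same route as the paper's proof: reduce to optimizing $\tr\{\tr_A\{\rho_{AB}^{\alpha}\}\,\sigma_B^{1-\alpha}\}$, apply the H\"older inequality with $p=\tfrac1\alpha$, $q=\tfrac1{1-\alpha}$ for $\alpha<1$ and the reverse H\"older inequality (with the support condition) for $\alpha>1$, and verify that $\sigma_B^\star \propto \big(\tr_A\{\rho_{AB}^\alpha\}\big)^{1/\alpha}$ attains the bound\,---\,the paper even proves exactly these two trace inequalities in its appendix (Lemma~\ref{lm:hoelder}), via a pinching argument. The only cosmetic difference is that you invoke the H\"older equality condition to identify the optimizer, whereas the paper simply substitutes the candidate $\sigma_B^*$ to get one direction of the bound.
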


This generalizes a result by one of the current authors~\cite[Lem.~7]{hayashi12}.

\begin{proof}
  Recall the definition
  \begin{align}
    H_{\alpha}^{\uparrow}(A|B)_{\rho} &= \sup_{\sigma_B \in \cS(B)} \frac{1}{1-\alpha} \log \tr\big\{ \rho_{AB}^{\alpha}\,\id_A \otimes \sigma_B^{1-\alpha} \big\} \\&
    = \sup_{\sigma_B \in \cS(B)} \frac{1}{1-\alpha} \log \tr\big\{ \tr_A \{ \rho_{AB}^{\alpha}\} \sigma_B^{1-\alpha} \big\}.
  \end{align}
  This can immediately be lower bounded by the expression in~\eqref{eq:dau-new} by substituting
  \begin{align}
    \sigma_B^* = \frac{\big( \tr_A \{ \rho_{AB}^{\alpha} \} \big)^{\frac{1}{\alpha}}}{\tr \Big\{ \big( \tr_A \{ \rho_{AB}^{\alpha} \} \big)^{\frac{1}{\alpha}} \Big\}}
    \end{align}
    for $\sigma_B$. It remains to show that this choice is optimal. We employ the following H\"older and reverse H\"older inequalities (cf.~Lemma~\ref{lm:hoelder} in Appendix~\ref{app:hoelder}). For any $A, B \geq 0$, the H\"older inequality states that 
    \begin{align}
      \tr\{ A B \} \leq \big( \tr \{ A^{p} \} \big)^{\frac{1}{p}} \big( \tr \{ B^{q} \} \big)^{\frac{1}{q}} \qquad \textrm{for all }  p, q > 1 \textrm{ s.t. }\frac{1}{p} + \frac{1}{q} = 1 . \label{eq:hh1}
    \end{align}
    Furthermore, if $B \gg A$, we also have a reverse H\"older inequality which states that
    \begin{align}
      \tr\{ A B \} \geq \big( \tr \{ A^{p} \} \big)^{\frac{1}{p}} \big( \tr \{ B^{q} \} \big)^{\frac{1}{q}} \qquad \textrm{for all }q < 0 < p < 1 \textrm{ s.t. } \frac{1}{p} + \frac{1}{q} = 1 . \label{eq:hh2}
    \end{align}
    For $\alpha < 1$, we employ~\eqref{eq:hh1} for $p = \frac{1}{\alpha}$, $q = \frac{1}{1-\alpha}$, $A = \tr_A\{\rho_{AB}^{\alpha} \}$ and $B = \sigma_B^{1-\alpha}$ to find
    \begin{align}
      \tr\big\{ \tr_A \{ \rho_{AB}^{\alpha}\} \sigma_B^{1-\alpha} \big\}
      \leq \bigg( \tr\Big\{ \big( \tr_A \{ \rho_{AB}^{\alpha}\} \big)^{\frac{1}{\alpha}} \Big\} \bigg)^{\alpha} \big( \tr\{\sigma_B\} \big)^{1-\alpha},
    \end{align}
    which yields the desired upper bound since $\tr\{\sigma_B\} = 1$.
    For $\alpha > 1$, we instead use~\eqref{eq:hh2}. 
    This leads us to~\eqref{eq:dau-new} upon the same substitutions, concluding the proof.
\end{proof}

An alternative proof also follows rather directly from a quantum generalization of Sibson's identity, which was introduced by Sharma and Warsi~\cite[Lem.~3 in Suppl.~Mat.]{sharma13}.

This allows us to show our main result.
\begin{theorem}
  Let $\alpha,\beta \in (0, 1) \cup (1, \infty)$ with $\alpha \cdot \beta = 1$ and let $\rho_{ABC} \in \cS(ABC)$ be pure. Then,
%  \begin{align}
    $H_{\alpha}^{\uparrow}(A|B)_{\rho} + \widetilde{H}_{\beta}^{\downarrow}(A|C)_{\rho} = 0$.
%  \end{align}
\end{theorem}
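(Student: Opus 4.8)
The plan is to reduce the stated identity to a single trace equality and then prove that equality from the structure of pure tripartite states. First I would make both summands explicit. For the first term I invoke Lemma~\ref{lm:dau-new}, which gives $H_{\alpha}^{\uparrow}(A|B)_{\rho} = \frac{\alpha}{1-\alpha}\log\tr\{(\tr_A\{\rho_{AB}^{\alpha}\})^{1/\alpha}\}$. For the second term I substitute $\beta = 1/\alpha$ into $\widetilde{H}_{\beta}^{\downarrow}(A|C)_{\rho} = -\widetilde{D}_{\beta}(\rho_{AC}\|\id_A\otimes\rho_C)$ and simplify the exponent, noting that $\frac{1-\beta}{2\beta} = \frac{\alpha-1}{2}$ and $\frac{1}{\beta-1} = \frac{\alpha}{1-\alpha}$. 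The prefactor $\frac{\alpha}{1-\alpha}$ then appears in both terms, so moving $\widetilde{H}_{\beta}^{\downarrow}$ to the other side collapses the whole theorem to the single trace identity
\[
\tr\big\{\big(\tr_A\{\rho_{AB}^{\alpha}\}\big)^{1/\alpha}\big\} = \tr\big\{\big((\id_A\otimes\rho_C^{\frac{\alpha-1}{2}})\,\rho_{AC}\,(\id_A\otimes\rho_C^{\frac{\alpha-1}{2}})\big)^{1/\alpha}\big\}.
\]

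To prove this identity I would use the Schmidt decomposition of the pure state across the cut $C$ vs.\ $AB$, writing $|\psi\rangle_{ABC} = \sum_k \sqrt{\lambda_k}\,|\xi_k\rangle_{AB}\otimes|k\rangle_C$ with $\lambda_k>0$, so that $\rho_C = \sum_k \lambda_k|k\rangle\langle k|$ and $\rho_{AB} = \sum_k \lambda_k|\xi_k\rangle\langle\xi_k|$. After fixing product bases I encode each Schmidt vector by its amplitude matrix $\Xi_k$, defined through $|\xi_k\rangle_{AB} = \sum_{a,b}(\Xi_k)_{ab}|a\rangle_A|b\rangle_B$, and set $L_k := \lambda_k^{\alpha/2}\Xi_k$. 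Evaluating the partial traces then shows, on the one hand, $\tr_A\{\rho_{AB}^{\alpha}\} = \sum_k L_k^{T}L_k^{*}$ as an operator on $B$, and on the other hand that the operator inside the right-hand trace equals $N := \sum_{k,l} L_k L_l^{\dagger}\otimes|k\rangle\langle l|_C$ on $AC$; here the factor $\rho_C^{(\alpha-1)/2}$ is exactly what promotes the Schmidt weight of each index to $\lambda^{\alpha/2}$. These two operators act on different spaces, and bridging them is the heart of the matter.

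The bridge is the observation that $N$ factorizes. Defining the map $\mathbf{L} := \sum_k L_k\otimes|k\rangle_C$ from $B$ to $AC$, one checks $N = \mathbf{L}\mathbf{L}^{\dagger}$, while $\mathbf{L}^{\dagger}\mathbf{L} = \sum_k L_k^{\dagger}L_k$ by orthonormality of the $|k\rangle_C$. Since $MM^{\dagger}$ and $M^{\dagger}M$ share the same nonzero spectrum, $\tr\{N^{1/\alpha}\} = \tr\{(\sum_k L_k^{\dagger}L_k)^{1/\alpha}\}$. Finally I match this with the $B$-side operator: $\sum_k L_k^{T}L_k^{*}$ is precisely the entrywise complex conjugate of the positive semidefinite $\sum_k L_k^{\dagger}L_k$, and complex conjugation (equivalently transposition) preserves the spectrum of a Hermitian operator, so the two $1/\alpha$-power traces coincide, completing the argument. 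I expect the main obstacle to be exactly this step of relating operators living on $B$ and on $AC$ --- identifying the correct factorization $\mathbf{L}$ and keeping the transpose/conjugation bookkeeping straight. Minor care is also needed with supports when $\alpha<1$, where $\rho_C^{(\alpha-1)/2}$ is defined only on the support of $\rho_C$; restricting the Schmidt sum to $\lambda_k>0$ handles this and guarantees $\rho_{AC}\ll\id_A\otimes\rho_C$, so that $\widetilde{D}_{\beta}$ is finite.
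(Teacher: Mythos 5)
Your proof is correct and follows essentially the same route as the paper: after the identical reduction via Lemma~\ref{lm:dau-new}, your factorization $N=\mathbf{L}\mathbf{L}^{\dagger}$ versus $\mathbf{L}^{\dagger}\mathbf{L}$ is precisely the paper's observation that the two operators in question are the complementary marginals (on $AC$ and on $B$) of the single rank-one operator $\big(\id_{AB}\otimes\rho_C^{(\alpha-1)/2}\big)\rho_{ABC}\big(\id_{AB}\otimes\rho_C^{(\alpha-1)/2}\big)$, written out in Schmidt coordinates. The only presentational difference is that the paper argues basis-free, transferring the sandwich from $AB$ to $C$ via the Schmidt decomposition across $AB$:$C$, while you carry the transpose/conjugation bookkeeping explicitly; both hinge on the same fact that complementary marginals of a rank-one operator share their nonzero spectrum.
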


\begin{proof}
  Substituting $\beta = \frac{1}{\alpha}$ and employing Lemma~\ref{lm:dau-new}, it remains to show that
  \begin{align}
    &H_{\alpha}^{\uparrow}(A|B)_{\rho} = \frac{\alpha}{1-\alpha} \log \tr \Big\{ \big( \tr_A \{ \rho_{AB}^{\alpha} \} \big)^{\frac{1}{\alpha}}  \Big\}
  \end{align}
  is equal to
  \begin{align}
    -\widetilde{H}_{\beta}^{\downarrow}(A|C)_{\rho} &= - \frac{1}{1-\beta} \log \tr \Bigg\{ \bigg( \Big(\id_A \otimes \rho_C^{\frac{1-\beta}{2\beta}}\Big) \rho_{AC} \Big(\id_A \otimes \rho_C^{\frac{1-\beta}{2\beta}}\Big) \bigg)^{\beta} \Bigg\} \\
    &= \frac{\alpha}{1-\alpha} \log \tr \Bigg\{ \bigg( \Big(\id_A \otimes \rho_C^{\frac{\alpha-1}{2}} \Big) \rho_{AC} \Big( \id_A \otimes \rho_C^{\frac{\alpha-1}{2}} \Big) \bigg)^{\frac{1}{\alpha}}  \Bigg\} .
  \end{align}
  In the following we show something stronger, namely that the operators 
  \begin{equation}
   \tr_A \{ \rho_{AB}^{\alpha} \} \qquad \textrm{and} \qquad
   \Big(\id_A \otimes \rho_C^{\frac{\alpha-1}{2}}\Big) \rho_{AC} \Big(\id_A \otimes \rho_C^{\frac{\alpha-1}{2}} \Big) \label{eq:marginals}
   \end{equation}
   are unitarily equivalent. This is true since both of these operators are marginals\,---\,on $B$ and $AC$\,---\,of the same tripartite rank-$1$ operator, 
   \begin{align}
     \Big( \id_{AB} \otimes \rho_{C}^{\frac{\alpha-1}{2}} \Big) \rho_{ABC} \Big( \id_{AB} \otimes \rho_{C}^{\frac{\alpha-1}{2}} \Big) .
   \end{align}
    To see that this is indeed true, note the first operator in~\eqref{eq:marginals} can be rewritten as
   \begin{align}
     \tr_A \{ \rho_{AB}^{\alpha} \} &= \tr_A \Big\{ \rho_{AB}^{\frac{\alpha-1}{2}} \rho_{AB}\, \rho_{AB}^{\frac{\alpha-1}{2}} \Big\} \\
     &= \tr_{AC} \Big\{ \Big( \rho_{AB}^{\frac{\alpha-1}{2}} \otimes \id_C \Big) \rho_{ABC} \Big( \rho_{AB}^{\frac{\alpha-1}{2}} \otimes \id_C \Big) \Big\} \\
     &= \tr_{AC} \Big\{ \Big( \id_{AB} \otimes \rho_{C}^{\frac{\alpha-1}{2}} \Big) \rho_{ABC} \Big( \id_{AB} \otimes \rho_{C}^{\frac{\alpha-1}{2}} \Big) \Big\} \, .
   \end{align}
   The last equality can be verified using the Schmidt decomposition of $\rho_{ABC}$ with regards to the partition $AB$:$C$.
   This concludes the proof.
\end{proof}
The relation can readily be extended for all $\alpha \geq 0$ and $\beta \geq 0$. The limiting case $\alpha = 1$ is simply the duality of the conditional von Neumann entropy~\eqref{eq:dual-vn}, whereas the case $\alpha = 0, \beta = \infty$ was also shown in~\cite[Prop.~3.11]{berta08}. (See~\cite[Lem.~25]{tomamichel10} for a concise proof.) 
%\TODO{The case $\alpha = \infty, \beta = 0$ is actually not understood yet... so I have to exclude it. This makes some statements in the following look a bit more complicated than need be, it would be nice to show this special case as well.}
Again, note that the transformation $\alpha \mapsto \beta = \frac{1}{\alpha}$ maps the interval $[0, 2]$ where data-processing holds for $H_{\alpha}^{\uparrow}$ to $[\frac{1}{2}, \infty]$ where data-processing holds for $\widetilde{H}_{\beta}^{\downarrow}$.

We summarize these duality relations in the following theorem, where we take note that the first and second statements have been shown in~\cite{tomamichel08} and~\cite{lennert13,beigi13}, respectively.
\begin{theorem}
  \label{thm:dual}
  For any pure $\rho_{ABC} \in \cS(ABC)$, the following holds:\footnote{We use the convention that $\frac{1}{\infty} = 0$ and $0\cdot\infty = 1$.}
  \begin{align}
    H_{\alpha}^{\downarrow}(A|B)_{\rho} + H_{\beta}^{\downarrow}(A|C)_{\rho} &= 0 \qquad && \textrm{for}  &&\alpha, \beta \in [0, 2],\, \alpha + \beta = 2, \\
   \widetilde{H}_{\alpha}^{\uparrow}(A|B)_{\rho} + \widetilde{H}_{\beta}^{\uparrow}(A|C)_{\rho} &= 0 && \textrm{for} &&\alpha, \beta \in \Big[\frac12, \infty\Big],\, \frac{1}{\alpha} + \frac{1}{\beta} = 2, \\
   H_{\alpha}^{\uparrow}(A|B)_{\rho} + \widetilde{H}_{\beta}^{\downarrow}(A|C)_{\rho} &= 0 && \textrm{for} &&\alpha, \beta \in [0, \infty],\, \alpha \cdot \beta = 1 . \label{eq:dual3}
\end{align}
\end{theorem}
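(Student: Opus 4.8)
The plan is to assemble Theorem~\ref{thm:dual} from three pieces of different origin and then to close each parameter range at its endpoints. For the interiors this is almost immediate: the first identity, for $\alpha+\beta=2$, is \cite[Lem.~6]{tomamichel08}; the second, for $\frac1\alpha+\frac1\beta=2$, is the $\widetilde H^\uparrow$ duality of \cite{lennert13} and \cite{beigi13}; and the third, for $\alpha\cdot\beta=1$, is exactly the theorem established immediately above, which gives $H_\alpha^\uparrow(A|B)_\rho+\widetilde H_\beta^\downarrow(A|C)_\rho=0$ for all $\alpha,\beta\in(0,1)\cup(1,\infty)$ with $\alpha\beta=1$. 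I would quote these three facts and note that the reparametrisations $\beta=2-\alpha$, $\beta=\frac{\alpha}{2\alpha-1}$ and $\beta=\frac1\alpha$ each map the open interior of the stated interval onto itself. Thus the only work left is to extend each identity to the distinguished points where $\alpha$ or $\beta$ lies in $\{0,1,\infty\}$.

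The point $\alpha=\beta=1$ I would handle uniformly for all three relations: since $D_\alpha\to D$ and $\widetilde D_\alpha\to D$ as $\alpha\to1$, every one of the four conditional entropies converges to the von Neumann conditional entropy of~\eqref{eq:vn1} there, so each identity degenerates to the pure-state duality $H(A|B)_\rho+H(A|C)_\rho=0$ of~\eqref{eq:dual-vn}. For the remaining extreme points I would argue by passing to the limit: each conditional R\'enyi entropy is \emph{defined} at $\alpha\in\{0,\infty\}$ as the limit of its interior values, and the reparametrisation sends endpoint to endpoint (with the stated conventions $\frac1\infty=0$, $0\cdot\infty=1$), so taking the limit on both sides of the interior identity yields the endpoint identity, provided the two limits exist. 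That existence is precisely the content of the limit formulas recorded in the introduction, e.g.~\eqref{eq:h0} and~\eqref{eq:min1}.

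Where I prefer an explicit check rather than a limiting argument, I would read the endpoints off the closed-form expressions. For the third relation the pair $(\alpha,\beta)=(0,\infty)$ asks that $H_0^\uparrow(A|B)_\rho=-\widetilde H_\infty^\downarrow(A|C)_\rho$, which I would verify by comparing~\eqref{eq:h0} with~\eqref{eq:min1} at the level of supports, or simply cite~\cite[Prop.~3.11]{berta08} and~\cite[Lem.~25]{tomamichel10}; the symmetric pair $(\infty,0)$ then follows by exchanging $B$ and $C$. The extreme points of the first relation, $\{(0,2),(2,0)\}$, and of the second, $\{(\frac12,\infty),(\infty,\frac12)\}$, reduce in the same way to the min/max-entropy, collision- and Hartley-entropy dualities of~\eqref{eq:min1},~\eqref{eq:min2},~\eqref{eq:hmax} and~\eqref{eq:h2}, all of which are available in the cited references.

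The main obstacle is the sole genuinely analytic step, namely justifying the endpoint limits at $\alpha\in\{0,\infty\}$: there the conditional entropies are traces of powers of non-commuting operators that degenerate to support projectors and semidefinite inequalities, so the limits are not automatic and the limiting values must be matched on both sides of each identity rather than merely asserted. Everything else is either a direct citation or the preceding theorem. I would also remark that the new (third) relation owes its clean proof to the observation that the two operators whose powers enter the two entropies are marginals, on $B$ and on $AC$, of one common rank-$1$ operator, hence unitarily equivalent; the first two relations rest on analogous purification arguments in their respective references, and recognising this common structure is what makes the endpoint continuity the only remaining gap.
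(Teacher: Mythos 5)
Your proposal follows essentially the same route as the paper: the first two identities are quoted from \cite{tomamichel08} and \cite{lennert13,beigi13}, the third is the duality theorem proved immediately beforehand (valid on the interior via the rank-one marginal argument), and the endpoint cases are closed using the limit definitions of the entropies together with the citations \cite[Prop.~3.11]{berta08} and \cite[Lem.~25]{tomamichel10}. The only difference is one of emphasis---you make explicit the limit-interchange issue at $\alpha \in \{0,\infty\}$ that the paper dispatches with ``can readily be extended''---so nothing further is needed.
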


\section{Some Inequalities Relating Conditional Entropies}

Our first application yields relations between different conditional R\'enyi entropies for arbitrary mixed states.
Recently, Mosonyi~\cite[Lem.~2.1]{mosonyi13} used a converse of the Araki-Lieb-Thirring trace inequality due to Audenaert~\cite{Audenaert08} to find a converse to the ordering relation $D_{\alpha}(\rho\|\sigma) \geq \widetilde{D}_{\alpha}(\rho\|\sigma)$, namely
\begin{align}
\widetilde{D}_{\alpha}(\rho\|\sigma) \geq \alpha\cdot D_{\alpha}(\rho\|\sigma)+ \log\tr \big\{\rho^{\alpha} \big\} +(\alpha - 1)\log\|\sigma\| \,.
\end{align}
Here we follow a different approach and show that inequalities of a similar type for the conditional entropies are a direct corollary of the duality relations in Theorem~\ref{thm:dual}.

\begin{corollary}
  Let $\rho_{AB} \in \cS(AB)$. Then, the following inequalities hold for $\alpha \in \left[\frac{1}{2}, \infty\right]$:
  \begin{align}
    H_{\alpha}^{\uparrow}(A|B)_{\rho} \leq \widetilde{H}_{\alpha}^{\uparrow}(A|B)_{\rho} &\leq H_{2 - \frac{1}{\alpha}}^{\uparrow}(A|B)_{\rho}\,,\label{eq:ineq1}\\
    H_{\alpha}^{\downarrow}(A|B)_{\rho} \leq H_{\alpha}^{\uparrow}(A|B)_{\rho} &\leq H_{2 - \frac{1}{\alpha}}^{\downarrow}(A|B)_{\rho}\,,\label{eq:ineq2}\\
    \widetilde{H}_{\alpha}^{\downarrow}(A|B)_{\rho} \leq \widetilde{H}_{\alpha}^{\uparrow}(A|B)_{\rho} &\leq \widetilde{H}_{2-\frac{1}{\alpha}}^{\downarrow}(A|B)_{\rho}\,,\label{eq:ineq3}\\
    H_{\alpha}^{\downarrow}(A|B)_{\rho} \leq \widetilde{H}_{\alpha}^{\downarrow}(A|B)_{\rho} &\leq H_{2-\frac{1}{\alpha}}^{\downarrow}(A|B)_{\rho}\label{eq:ineq4}\,. 
  \end{align}
\end{corollary}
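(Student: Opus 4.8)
The plan is to deduce all four chains from the duality relations of Theorem~\ref{thm:dual} combined with the four elementary orderings collected in Figure~\ref{fig:overview}: the definitional inequalities $H_\alpha^\downarrow \leq H_\alpha^\uparrow$ and $\widetilde H_\alpha^\downarrow \leq \widetilde H_\alpha^\uparrow$, and the trace-inequality orderings $H_\alpha^\downarrow \leq \widetilde H_\alpha^\downarrow$ and $H_\alpha^\uparrow \leq \widetilde H_\alpha^\uparrow$ coming from~\eqref{eq:lieb-trace}. Each left-hand inequality is literally one of these four orderings, so no work is needed there. For the right-hand inequalities the first step is to pass to a purification $\rho_{ABC}$ of $\rho_{AB}$; since every entropy in play is invariant under local isometries and the systems $B$ and $C$ enter the duality symmetrically, the choice of purification is immaterial.

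The engine for the right-hand inequalities is to dualize \emph{both} sides onto the complementary system $C$ and observe that they arrive at a common R\'enyi parameter, so that the inequality collapses onto an elementary ordering on $C$. Writing $\beta = 2 - \frac1\alpha$, I would treat~\eqref{eq:ineq2} as follows: applying the third relation~\eqref{eq:dual3} (valid for $\alpha\cdot\beta=1$) turns $H_\alpha^\uparrow(A|B)$ into $-\widetilde H_{1/\alpha}^\downarrow(A|C)$, while the first relation (valid for $\alpha+\beta=2$) turns $H_\beta^\downarrow(A|B)$ into $-H_{2-\beta}^\downarrow(A|C) = -H_{1/\alpha}^\downarrow(A|C)$; the two parameters coincide and the claim reduces to $H_{1/\alpha}^\downarrow(A|C) \leq \widetilde H_{1/\alpha}^\downarrow(A|C)$, which is~\eqref{eq:lieb-trace}. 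The remaining three are identical in spirit: a $\widetilde H^\uparrow$ term is dualized by the second relation, while $H^\uparrow$ and $\widetilde H^\downarrow$ terms are dualized by the third (the latter read with $B$ and $C$ interchanged) and $H^\downarrow$ terms by the first. In every case the common target parameter on $C$ is either $\frac1\alpha$ or $\gamma=\frac{\alpha}{2\alpha-1}$, and the surviving statement is exactly one of the four elementary orderings\,---\,so that, pleasingly, each right-hand inequality of one chain is the mirror image of a left-hand inequality of another.

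The step demanding the most care\,---\,and the main obstacle\,---\,is the parameter bookkeeping: one must check that the two separate uses of duality genuinely land on the same parameter, which amounts to the identity $\frac{1}{2-1/\alpha}=\frac{\alpha}{2\alpha-1}$ relating $\beta$ and $\gamma$, and that each invocation stays within the range where the corresponding line of Theorem~\ref{thm:dual} holds. This is settled by recording that, as $\alpha$ runs over $[\frac12,\infty]$, both $\beta=2-\frac1\alpha$ and $\frac1\alpha$ sweep $[0,2]$ while $\gamma=\frac{\alpha}{2\alpha-1}$ sweeps $[\frac12,\infty]$, so that the first, third and second duality relations apply exactly where needed. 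The endpoints $\alpha\in\{\frac12,1,\infty\}$ require no separate argument, since Theorem~\ref{thm:dual} is stated on the closed intervals and at $\alpha=1$ all four chains degenerate to equalities between copies of the conditional von Neumann entropy.
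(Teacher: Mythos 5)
Your proposal is correct and is essentially the paper's own argument: the paper likewise passes to a purification, invokes the elementary orderings of Figure~\ref{fig:overview} on the marginal $\rho_{AC}$, and substitutes the dual entropies from Theorem~\ref{thm:dual} to obtain the right-hand inequalities, with the left-hand ones read off directly from Figure~\ref{fig:overview}. The only difference is presentational\,---\,you dualize the claimed inequalities from $B$ onto $C$ while the paper dualizes the orderings from $C$ back onto $B$\,---\,and your parameter bookkeeping (the identity $\frac{1}{2-1/\alpha}=\frac{\alpha}{2\alpha-1}$ and the range sweeps) is accurate.
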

\begin{proof}
  Note that the first inequality on each line follows directly from the relations depicted in Figure~\ref{fig:overview}.
  Next, consider an arbitrary purification $\rho_{ABC} \in \cS(ABC)$ of $\rho_{AB}$. The relations of Figure~\ref{fig:overview}, for any $\gamma \geq 0$, applied to the marginal $\rho_{AC}$ are given as
  \begin{align}
    &\widetilde{H}_{\gamma}^{\uparrow}(A|C)_{\rho} \geq \widetilde{H}_{\gamma}^{\downarrow}(A|C)_{\rho} \geq H_{\gamma}^{\downarrow}(A|C)_{\rho}\,, \qquad \textrm{and} \\
    &\widetilde{H}_{\gamma}^{\uparrow}(A|C)_{\rho} \geq H_{\gamma}^{\uparrow}(A|C)_{\rho} \geq H_{\gamma}^{\downarrow}(A|C)_{\rho}\,.
  \end{align}
  We then substitute the corresponding dual entropies according to Theorem~\ref{thm:dual}, which yields the desired inequalities upon appropriate new parametrization.
\end{proof}

We note that the fully classical (commutative) case of all these inequalities is trivial except for the second inequalities in~\eqref{eq:ineq2} and~\eqref{eq:ineq3}, which were proven before by one of authors~\cite[Lem. 6]{Hayashi:2013aa}. Other special cases of these inequalities are also well known and have operational significance. For example,~\eqref{eq:ineq3} for $\alpha = \infty$ states that $\widetilde{H}_{\infty}^{\uparrow}(A|B)_{\rho} \leq \widetilde{H}_{2}^{\downarrow}(A|B)_{\rho}$, which relates the conditional min-entropy in~\eqref{eq:min2} to the conditional collision entropy in~\eqref{eq:h2}. To understand this inequality more operationally we rewrite the conditional min-entropy as its dual semi-definite program~\cite{koenig08},
\begin{align}
\widetilde{H}_{\infty}^{\uparrow}(A|B)_{\rho}=\inf_{\Lambda_{B\rightarrow A'}}-\log\big(|A|\cdot F(\Phi_{AA'},\Lambda_{B\rightarrow A'}[\rho_{AB}]\big)\,,
\end{align}
where $A'$ is a copy of $A$, the infimum is over all quantum channels $\Lambda_{B\rightarrow A'}$, $|A|$ denotes the dimension of $A$, and $\Phi_{AA'}$ is the maximally entangled state on $AA'$. Now, the above inequality becomes apparent since the conditional collision entropy can be written as~\cite{berta13},
\begin{align}
\widetilde{H}_{2}^{\downarrow}(A|B)_{\rho}=-\log\big(|A|\cdot F(\Phi_{AA'},\Lambda_{B\rightarrow A'}^{\mathrm{pg}}[\rho_{AB}]\big)\,,
\end{align}
where $\Lambda_{B\rightarrow A'}^{\mathrm{pg}}$ denotes the pretty good recovery map of Barnum and Knill~\cite{Barnum02}. Also,~\eqref{eq:ineq1} for $\alpha = \frac12$ yields $\widetilde{H}_{\nicefrac{1}{2}}^{\uparrow}(A|B)_{\rho} \leq H_0^{\uparrow}(A|B)_{\rho}$, which relates the quantum conditional max-entropy in~\eqref{eq:hmax} to the quantum conditional generalization of the Hartley entropy in~\eqref{eq:h0}.

We believe that the sandwich relations~\eqref{eq:ineq1}--\eqref{eq:ineq4} for $\alpha$ close to $1$ will prove useful in applications in quantum information theory as they allow to switch between different definitions of the conditional R\'enyi entropy.

\section{Entropic Uncertainty Relations}

A series of papers~\cite{berta10,tomamichel11} culminating in~\cite{colbeck11} established a general technique to derive uncertainty relations for quantum conditional entropies based on two main ingredients: (1) a duality relation, and (2) a data-processing inequality for the underlying divergence. It is evident that all our definitions of conditional R\'enyi entropies fit the framework of~\cite{colbeck11}, which then immediately yields the following entropic uncertainty relations:
\begin{corollary}
  Let $\rho_{ABC} \in \cS(ABC)$ and let $\{ M_x \}_x$ and $\{ N_y \}_y$ be two positive operator-valued measures. We define \emph{the overlap} $c := \max_{x,y} \big\| \sqrt{M_x} \sqrt{N_y} \big\|$ and consider the \emph{post-measurement states}
  \begin{align}
    \rho_{XB} := \bigoplus_x \tr_{AC} \big\{ M_x \rho_{ABC} \big\} \quad \textrm{and} \quad \rho_{YC} := \bigoplus_y \tr_{AB} \big\{ N_y \rho_{ABC} \big\} \,. \label{eq:pm-states}
  \end{align}
 Then, the following relations hold:
  \begin{align}
     H_{\alpha}^{\downarrow}(X|B)_{\rho} + H_{\beta}^{\downarrow}(Y|C)_{\rho} \geq \log \frac{1}{c}, \qquad && \textrm{for}  &&\alpha, \beta \in [0, 2],\, \alpha + \beta = 2, \label{eq:ucr1}\\
          \widetilde{H}_{\alpha}^{\uparrow}(X|B)_{\rho} + \widetilde{H}_{\beta}^{\uparrow}(Y|C)_{\rho} \geq \log \frac{1}{c}, \qquad && \textrm{for}  &&\alpha, \beta \in \Big[\frac12, \infty\Big],\, \frac{1}{\alpha} + \frac{1}{\beta} = 2 \label{eq:ucr2},\\
     H_{\alpha}^{\uparrow}(X|B)_{\rho} + \widetilde{H}_{\beta}^{\downarrow}(Y|C)_{\rho} \geq \log \frac{1}{c}, \qquad && \textrm{for}  &&\alpha \in [0, 2],\, \beta \in \Big[\frac12, \infty\Big],\, \alpha \cdot \beta = 1. \label{eq:ucr3}
  \end{align}
\end{corollary}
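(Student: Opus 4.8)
The plan is to invoke the general framework of \cite{colbeck11}, which manufactures entropic uncertainty relations of exactly the form \eqref{eq:ucr1}--\eqref{eq:ucr3} out of two inputs: a duality relation for the conditional entropy on tripartite pure states, and a data-processing inequality (DPI) for the underlying divergence. Both inputs are now available for all four families of conditional R\'enyi entropies, so the task reduces to checking, line by line, that the required pair of ingredients holds on the claimed parameter ranges; the framework then assembles them into the stated bound.

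For the duality ingredient I would read off the appropriate line of Theorem~\ref{thm:dual}: relation \eqref{eq:ucr1} is driven by the $H^{\downarrow}$-duality ($\alpha+\beta=2$), relation \eqref{eq:ucr2} by the $\widetilde H^{\uparrow}$-duality ($\tfrac1\alpha+\tfrac1\beta=2$), and relation \eqref{eq:ucr3} by the mixed duality \eqref{eq:dual3} ($\alpha\cdot\beta=1$). For the DPI ingredient I would recall the contraction properties stated earlier: $H^{\downarrow}_\gamma$ and $H^{\uparrow}_\gamma$ are contractive for $\gamma\in[0,2]$, while $\widetilde H^{\downarrow}_\gamma$ and $\widetilde H^{\uparrow}_\gamma$ are contractive for $\gamma\ge\tfrac12$. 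The crucial bookkeeping point, already emphasized after Theorem~\ref{thm:dual}, is that each duality map carries the DPI interval of one family exactly onto the DPI interval of its dual; for \eqref{eq:ucr3}, say, the range $\alpha\in[0,2]$ secures DPI for $H^{\uparrow}_\alpha$ and is sent by $\beta=1/\alpha$ onto $\beta\in[\tfrac12,\infty]$, which secures DPI for $\widetilde H^{\downarrow}_\beta$. The analogous checks for \eqref{eq:ucr1} and \eqref{eq:ucr2} are immediate.

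To make the argument self-contained I would spell out the mechanism the framework supplies, using the $\widetilde H^{\uparrow}$ case \eqref{eq:ucr2} as a model. First purify $\rho_{ABC}$; this alters neither $\rho_{XB}$ nor $\rho_{YC}$, and hence neither entropy. Next realize the measurement producing $Y$ by a coherent Stinespring dilation together with a dephasing of $Y$, yielding a global pure state one of whose marginals is exactly $\rho_{YC}$. Applying the duality of Theorem~\ref{thm:dual} to $\widetilde H^{\uparrow}_\beta(Y|C)$ rewrites it as a negative dual term $-\widetilde H^{\uparrow}_\alpha$ conditioned on the systems complementary to $YC$, which still contain the residual $A$ and the memory $B$; both summands are then conditional entropies of the same family about the outcome registers of the two complementary measurements on $A$. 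A single overlap inequality---the main place where the DPI is consumed, and where the bound $\|\sqrt{M_x}\sqrt{N_y}\|\le c$ enters---finally converts the incompatibility of $M$ and $N$ into the additive constant $\log\tfrac1c$, closing the chain.

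The main obstacle is precisely this overlap step: establishing, through the DPI of the underlying divergence applied to the (coherent) measurement of the complementary observable, the single inequality that produces $\log\tfrac1c$, and confirming that the accompanying dephasing and post-measurement manipulations are legitimate for each entropy throughout its parameter range. A secondary subtlety to respect is that the dualities of Theorem~\ref{thm:dual} must be applied to the global pure state built from the dilations rather than to $\rho_{ABC}$ directly. Once these two points are handled, all three relations \eqref{eq:ucr1}--\eqref{eq:ucr3} follow uniformly, since the only data that changes between them is which duality/DPI pair is plugged into the framework.
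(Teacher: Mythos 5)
Your proposal follows essentially the same route as the paper: the paper also derives all three relations by feeding the dualities of Theorem~\ref{thm:dual} and the data-processing ranges of the underlying divergences into the framework of~\cite[Thm.~1]{colbeck11}, with precisely the bookkeeping you describe for~\eqref{eq:ucr3} (restricting to $\alpha\in[0,2]$ where $D_\alpha$ satisfies DPI, which the map $\beta=1/\alpha$ sends onto $[\tfrac12,\infty]$). The internal mechanism you sketch (purification, coherent dilation, overlap step) lives inside the cited theorem and is not re-derived in the paper either, so your treatment is at the same level of rigor.
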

We want to point out that the first and second inequality were first shown in~\cite{colbeck11} and~\cite{lennert13}, respectively; the third inequality is novel. To verify it, we apply~\cite[Thm.~1]{colbeck11} to $H_{\alpha}^{\uparrow}(X|B)_{\rho}$ and note that $H_{\alpha}^{\uparrow}(X|B)_{\rho}$ has the required form. Furthermore, it is already pointed out in~\cite{colbeck11} that the underlying divergence, $D_{\alpha}(\rho\|\sigma)$ for $\alpha \in [0, 2]$, satisfies the required properties for the application of their theorem.
As such, comparing~\eqref{eq:ucr3} to the corresponding duality relation~\eqref{eq:dual3}, we see that in order to derive the uncertainty relation we need to restrict to $\alpha \in [0,2]$ to be in the regime where data-processing holds.

It is noteworthy that even for the case of classical side information (if the systems $B$ and $C$ are classical), the three relations are genuinely different. The first inequality bounds the sum of two $\downarrow$-entropies, the second the sum of two $\uparrow$-entropies, and the third inequality the sum of a $\downarrow$- and an $\uparrow$-entropy.
Let us further specialize these inequalities for the case where both $B$ and $C$ are trivial. It was already noted in~\cite{lennert13} that~\eqref{eq:ucr2} specializes to the well-known Maassen-Uffink relation~\cite{maassen88}. We have
\begin{align}
  H_{\alpha}(X)_{\rho} + H_{\beta}(Z)_{\rho} \geq \log \frac1{c} \quad \textrm{for} \quad \alpha, \beta \in \Big[\frac12, \infty\Big],\, \frac{1}{\alpha} + \frac{1}{\beta} = 2,
\end{align}
evaluated for the marginals of the states in~\eqref{eq:pm-states}. It is also easy to verify that~\eqref{eq:ucr1} and~\eqref{eq:ucr3} specialize to strictly weaker uncertainty relations when $B$ and $C$ are trivial.

\paragraph*{Acknowledgments.}
MT is funded by the Ministry of Education (MOE) and National Research Foundation Singapore, as well as MOE Tier 3 Grant ``Random numbers from quantum processes'' (MOE2012-T3-1-009). MB thanks the Center for Quantum Technologies, Singapore, for hosting him while this work was done.
MH is partially supported by a MEXT Grant-in-Aid for Scientific Research (A) No.~23246071
and the National Institute of Information and Communication Technology (NICT), Japan.
The Centre for Quantum Technologies is funded by the
Singapore Ministry of Education and the National Research Foundation
as part of the Research Centres of Excellence programme.

\appendix

\section{H\"older Inequalities}
\label{app:hoelder}

We prove the following H\"older and reverse H\"older inequalities for traces of operators. 

\begin{lemma}
  \label{lm:hoelder}
  Let $A, B \geq 0$ and let $p > 0$, $q \in \mathbb{R}$ such that $\frac{1}{p} + \frac{1}{q} = 1$. Then, the following H\"older and reverse H\"older inequalities hold:
    \begin{align}
      \tr\{ A B \} &\leq \big( \tr \{ A^{p} \} \big)^{\frac{1}{p}} \big( \tr \{ B^{q} \} \big)^{\frac{1}{q}} \qquad \textrm{if }  p > 1 \,, \label{eq:hdirect}\\
      \tr\{ A B \} &\geq \big( \tr \{ A^{p} \} \big)^{\frac{1}{p}} \big( \tr \{ B^{q} \} \big)^{\frac{1}{q}} \qquad \textrm{if }  p < 1 \textrm{ and } B \gg A \,. \label{eq:hreverse}
    \end{align}
  Here, $B^q$ is evaluated on the support of $B$ by convention.
\end{lemma}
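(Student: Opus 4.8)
The plan is to reduce both operator inequalities to their scalar counterparts applied to the eigenvalue sequences of $A$ and $B$, using von Neumann's trace inequality as the bridge. Write the eigenvalues of $A$ and $B$ in decreasing order as $\lambda_1^{\downarrow}(A) \ge \lambda_2^{\downarrow}(A) \ge \cdots$ and $\lambda_1^{\downarrow}(B) \ge \cdots$, and in increasing order as $\lambda_i^{\uparrow}(\cdot)$. Since $A, B \ge 0$, for any $r > 0$ one has $\tr\{A^r\} = \sum_i \lambda_i(A)^r$, and likewise for $B$, so the right-hand sides of \eqref{eq:hdirect} and \eqref{eq:hreverse} are purely spectral quantities. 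This is the key structural observation that lets me sidestep the non-commutativity of $A$ and $B$.

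For the direct inequality \eqref{eq:hdirect} with $p > 1$ (hence $q > 1$), I would invoke the upper form of von Neumann's trace inequality, $\tr\{AB\} \le \sum_i \lambda_i^{\downarrow}(A)\,\lambda_i^{\downarrow}(B)$, and then apply the classical Hölder inequality to the nonnegative sequences $\{\lambda_i^{\downarrow}(A)\}$ and $\{\lambda_i^{\downarrow}(B)\}$:
\begin{align}
  \sum_i \lambda_i^{\downarrow}(A)\,\lambda_i^{\downarrow}(B) \le \Big( \sum_i \lambda_i(A)^p \Big)^{\frac{1}{p}} \Big( \sum_i \lambda_i(B)^q \Big)^{\frac{1}{q}} = \big(\tr\{A^p\}\big)^{\frac{1}{p}}\big(\tr\{B^q\}\big)^{\frac{1}{q}}.
\end{align}
Here $p, q > 0$ means zero eigenvalues contribute nothing, so no support hypothesis is needed, matching the statement.

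For the reverse inequality \eqref{eq:hreverse} with $0 < p < 1$ (hence $q < 0$) and $B \gg A$, I would use instead the lower form of von Neumann's inequality with the anti-sorted pairing, $\tr\{AB\} \ge \sum_i \lambda_i^{\downarrow}(A)\,\lambda_i^{\uparrow}(B)$. Because $B \gg A$ means $\mathrm{supp}(A) \subseteq \mathrm{supp}(B)$, I can restrict attention to the support of $B$, on which every eigenvalue $\lambda_i^{\uparrow}(B)$ entering the sum is strictly positive; this is exactly what makes $B^q$ (with $q < 0$, evaluated on the support of $B$ by the stated convention) well-defined. I then apply the scalar reverse Hölder inequality to the sequences $\{\lambda_i^{\downarrow}(A)\}$ and $\{\lambda_i^{\uparrow}(B)\}$, noting that the value of $\sum_i \lambda_i(B)^q = \tr\{B^q\}$ is independent of the ordering.

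The only genuine content beyond von Neumann's inequality is the scalar reverse Hölder inequality itself, which I would derive from the direct one by a substitution: with $P = 1/p > 1$ and conjugate $P' = 1/(1-p)$, write $a_i^p = (a_i b_i)^p\, b_i^{-p}$ and apply \eqref{eq:hdirect} in its scalar form to get $\sum_i a_i^p \le (\sum_i a_i b_i)^p (\sum_i b_i^q)^{1-p}$; rearranging and using $(1-p)/p = -1/q$ yields the claim. The main obstacle to watch is precisely that this substitution has no operator analogue\,---\,for non-commuting $A, B$ one cannot write $A^p = (AB)^p B^{-p}$\,---\,which is the whole reason the argument must be routed through eigenvalues rather than carried out at the operator level; the second point requiring care is the correct \emph{anti-sorted} pairing in the lower von Neumann bound together with the role of $B \gg A$ in preventing division by zero.
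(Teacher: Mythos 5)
Your proof is correct, but it takes a genuinely different route from the paper's. The paper also reduces the claim to the commutative (scalar) case, but its bridge is a \emph{pinching} $\mathcal{M}$ in the eigenbasis of $B$: since $\mathcal{M}[A]$ commutes with $B$, the scalar H\"older and reverse H\"older inequalities apply to the pair $(\mathcal{M}[A], B)$, and the pinched operator is then compared with $A$ itself using two matrix-analysis facts\,---\,the pinching inequality for Schatten norms, $\|\mathcal{M}[A]\|_p \leq \|A\|_p$ for $p>1$, and the operator-concavity consequence $\big(\mathcal{M}[A]\big)^p \geq \mathcal{M}[A^p]$ for $p<1$. Your bridge is instead the pair of eigenvalue-pairing trace bounds $\sum_i \lambda_i^{\downarrow}(A)\lambda_i^{\uparrow}(B) \leq \tr\{AB\} \leq \sum_i \lambda_i^{\downarrow}(A)\lambda_i^{\downarrow}(B)$ (which follow, e.g., from Birkhoff's theorem and the rearrangement inequality), after which the sorted and anti-sorted sequences are fed into the scalar H\"older and reverse H\"older inequalities; your handling of the support condition $B \gg A$ (restricting to $\mathrm{supp}(B)$, where $A$ is unchanged because $A\Pi_B = \Pi_B A = A$, and noting that zero eigenvalues of $A$ contribute nothing to $\tr\{A^p\}$ for $p>0$) is clean and correct, as is your derivation of the scalar reverse H\"older from the direct one via the substitution $a_i^p = (a_ib_i)^p b_i^{-p}$. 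The trade-off: your argument handles both directions symmetrically with a single tool and avoids any operator inequality beyond the pairing bounds, whereas the paper's pinching argument needs two different matrix-analysis inputs but is structurally tied to the quantum-information toolkit (the authors explicitly highlight the pinching reduction as the point of interest); one small caveat is that what you call the ``lower form of von Neumann's inequality'' is really the Hermitian rearrangement bound (von Neumann's inequality proper concerns singular values and gives only the upper form), so you should cite it as such, but the statement you use is true and standard.
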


The first statement also immediately follows from a H\"older inequality for unitarily invariant norms (the trace norm in this case), e.g.~in~\cite[Cor.~IV.2.6]{bhatia97}. However, we believe that the following reduction of the proof to the commutative case is noteworthy.

\begin{proof}
  For commuting $A$ and $B$, the above result immediately follows from the corresponding classical H\"older and reverse H\"older inequalities. Now, let $\mathcal{M}$ be a pinching in the eigenbasis of $B$. Since $\mathcal{M}[A]$ commutes with $B$, we have
  \begin{align}
    \tr\{ A B \} = \tr\{\mathcal{M}[A] B\} &\leq \big( \tr \big\{ \big( \mathcal{M}[A] \big)^{p} \big\} \big)^{\frac{1}{p}} \big( \tr \{ B^{q} \} \big)^{\frac{1}{q}} \qquad \textrm{if }  p > 1, \\
      \tr\{ A B \} = \tr\{\mathcal{M}[A] B\} &\geq \big( \tr \big\{ \big( \mathcal{M}[A] \big)^{p} \big\} \big)^{\frac{1}{p}} \big( \tr \{ B^{q} \} \big)^{\frac{1}{q}}     \qquad \textrm{if }  p < 1 \,.
  \end{align}
under the respective constraints.
 Now, note that for $p > 1$, we have $\| \mathcal{M}[A] \|_p \leq \| A \|_p$ by the pinching inequality for the Schatten $p$-norm~\cite[Eq.~(IV.52)]{bhatia97} and~\eqref{eq:hdirect} follows. On the other hand, for $p < 1$, we use~\cite[Thm. V.2.1]{bhatia97}, which implies that $\big(\mathcal{M}[A]\big)^p \geq \mathcal{M}[A^p]$. This yields~\eqref{eq:hreverse} and concludes the proof.
\end{proof}

%\bibliographystyle{arxiv}
%\bibliography{library,library_mb}

\end{document}